\newtheorem{theorem}{Theorem}[section]
\newtheorem{corollary}[theorem]{Corollary}
\newtheorem{lemma}{Lemma}
\theoremstyle{definition}
\newtheorem{definition}[theorem]{Definition}
\newtheorem{assumption}{Assumption}
\theoremstyle{remark}
\newtheorem{remark}[theorem]{Remark}
\newcommand{\ee}{\mathbb E}
\newcommand{\qq}{\mathbb Q}
\newcommand{\rr}{\mathbb R}
\newcommand{\pp}{\mathbb P}
\newcommand{\dd}{\text d}
\newcommand{\funcmax}[1]{\overline{#1}}
\begin{document}
\title[Hedging Errors Induced by Adaptive Discrete Trading]{Hedging Errors Induced by Discrete Trading Under an Adaptive Trading Strategy}
\author[M. Brod\'en]{Mats Brod\'en}
\address{Centre for Mathematical Sciences\\ Lund University\\ Box 118\\ 221 00 Lund, Sweden}
\email{matsb@maths.lth.se}
\author[M. Wiktorsson]{Magnus Wiktorsson}
\address{Centre for Mathematical Sciences\\ Lund University\\ Box 118\\ 221 00 Lund, Sweden}
\email{magnusw@maths.lth.se}
\begin{abstract}
Discrete time hedging in a complete diffusion market is considered. The hedge portfolio is rebalanced when the absolute difference between delta of the hedge portfolio and the derivative contract reaches a threshold level. The rate of convergence of the expected squared hedging error as the threshold level approaches zero is analyzed. The results hinge to a great extent on a theorem stating that the difference between the hedge ratios normalized by the threshold level tends to a triangular distribution as the threshold level tends to zero.
\end{abstract}
\keywords{Discrete time hedging, hedging error, rate of convergence, triangular distribution}
\subjclass[2000]{60F25, 60F05, 91B28}
\maketitle

\section{Introduction}
In a complete market setting contingent claims can be perfectly replicated by
trading in the underlying or in some other derivative. In general if the
market is modelled by a continuous time process then the hedge portfolio has to be rebalanced at every time instant. Continuous trading is in practice impossible; a limitation giving rise to a hedging error.

In this work a discretely rebalanced hedge portfolio following the well known delta hedging strategy is considered. The hedge portfolio is rebalanced when the delta of the hedge portfolio and the derivative contract differ by some amount, here denoted by $\eta$. We investigate the hedging error, denoted by $\mathcal{R}$, as we let $\eta$ approach $0$, and show that $\ee[\mathcal{R}^2]/\eta^2$ converges to a nondegenerate limit. The limit is calculated explicitly using a result for approximations of Wiener driven SDEs,  which states that the normalized difference between the continuously evolving hedge process and the piecewise constant approximating process as we let $\eta$ approach zero is equal in distribution to a random variable which is triangularly distributed. Since the considered trading rule will give rise to a random number of stochastic times where the portfolio is rebalanced, the expected number of rebalancing times with respect to $\eta$ is also investigated.

We do not claim optimality of the investigated hedging and rebalancing strategy. Instead we justify our investigation with the resemblance between the considered strategy and how things are done in practical situations. Clearly, in practice the decision process regarding when to rebalance the hedge portfolio may be somewhat more complex, involving other risk measures such as for example vega of the derivative.

Previous work on the hedging error induced by discrete trading has mainly dealt with strategies where the hedge portfolio has been rebalanced at $n$ fixed deterministic points in time. Results often concern the order of convergence of the hedging error with respect to the number of rebalancing points $n$. One of the first contributions in this direction was done by \citet{Zhang:1999}, where discrete time hedging on an equidistant time grid was considered. For European call and put options the order of $L^2$-convergence to $0$, of the hedging error, as $n$ approaches infinity was found to be $n^{-1/2}$. In \citet{Gobet_Temam:2001} options with more irregular payoff-functions was considered. In particular, for a digital option the rate of convergence was found to be $n^{-1/4}$. In \citet{Geiss:2002} the rate of convergence using non-equidistant time nets was studied. It was found that a rate of convergence of $n^{-1/2}$ could be achieved also for options with more irregular payoff-functions, such as the digital option. \citet{Hayashi_Mykland:2005} investigates hedging errors due to discreteness of trading times or observation times of the market process. They derive the limiting distribution of the hedging error utilizing a weak convergence approach. In \citet{Tankov_Voltchkova:2009} a weak convergence approach is used to analyse the asymptotic behaviour of hedging errors in models with jumps. In \citet{Geiss_Geiss:2006} the rate of convergence for a stochastic time net is studied.

In the next section our setting and some preliminaries regarding pricing and hedging is presented. Section \ref{sec:results} contains our main results. The first result, Theorem \ref{prop:gen_diff}, is a general limit theorem for approximations of It\^{o} processes, this result is then applied in the proof of Theorem \ref{prop:main_1} which regards the convergence of the hedging error to $0$ as $\eta$ approaches $0$. Theorem \ref{prop:main_2}, which deals with the expected number of rebalancings, is followed by a corollary to Theorem \ref{prop:main_1} and \ref{prop:main_2}. We finish of with some conclusions in Section~\ref{sec:discussion} followed by an appendix containing a technical lemma.

\section{Setting and preliminaries} \label{sec:setting}
Given a filtered probability space, $\{\Omega,$ $ \mathcal{F},$ $ \{ \mathcal{F}_t \}_{t \geq 0},$ $ \pp  \}$, we let the risky asset be modeled by the stochastic process $S$. Let $\qq$ be an equivalent Martingale measure such that the discounted price process is a Martingale under this measure and let the price process, under this measure, be defined by the dynamics
\begin{align}
\dd S_t &= r S_t \dd t + \sigma(t,S_t) S_t \dd W_t, \quad S_0=s_0 , 
\label{eqn:market_dynamics}
\end{align}
where $\{ W_t \}_{t \geq 0}$ is a one dimensional Wiener process. The risk free asset, $B$, is assumed to follow the dynamics
\begin{align*}
\dd B_t = r B_t \dd t  .
\end{align*}

Let $Y$ denote the logarithm of the process $S$ and let $\tilde{\sigma}$ denote the diffusion coefficient of the process $Y$, i.e. $Y_t=\log(S_t)$ and $\tilde{\sigma}(t,y)=\sigma(t,e^y)$. Furthermore let $p_S$  denote the transition density of the process $S$ and let $p_Y$ denote the transition density of the process $Y$. Below we will introduce a set of assumptions that guarantees that the transition densities are smooth enough, which translates to smoothness properties of the pricing function and its derivatives in this market.

\begin{assumption}  \label{Assump:sigma} ${}$
\begin{list}{}{\setlength{\leftmargin}{9mm} \setlength{\labelwidth}{5mm}}
\item[H1.]
\begin{enumerate}
\item There is a positive constant $\sigma_0$ such that $\tilde{\sigma}(t,y) \geq \sigma_0$ for all $(t,y) \in [0,T] \times \mathbb{R}$.
\item The function $\tilde{\sigma}$ is bounded and uniformly Lipschitz continuous in $(t,y)$ in compact subsets of $[0,T] \times \mathbb{R}$ and uniformly H\"{o}lder continuous in $y$.
\end{enumerate}
\item[H2.] The functions $(\partial/\partial y)\tilde{\sigma}(t,y)$ and $(\partial^2/\partial y^2)\tilde{\sigma}(t,y)$ are bounded continuous functions in $[0,T] \times \mathbb{R}$ and are H\"{o}lder continuous in $y$ uniformly with respect to $(t,y)$ in compact subsets of $[0,T] \times \mathbb{R}$.
\item[H3.] The function $(\partial^3/\partial y^3)\tilde{\sigma}(t,y)$ is bounded.
\end{list}
\end{assumption}
Under assumption H1 the transition probability function of $Y$ has a density, i.e. $p_Y$ \citep[see][Theorem 6.5.4]{Friedman:2006}. The densities $p_S$ and $p_Y$ are related through
\begin{equation*}
p_S(t,s,t',s') = \frac{p_Y(t,\log(s),t',\log(s'))}{s'} \, ,
\end{equation*}
and thus under assumption H1 also $S$ has a density.

Given a payoff function, $\Phi$, and a maturity date, $T$, we will let $u$ denote the price of the corresponding contract. The price is given by the well known risk neutral pricing formula (see e.g. \citet[][]{Bjork:2009})
\begin{equation}
u(t,s) = e^{-r(T-t)} \ee_{t,s}[\Phi(S_T)] \, ,
\label{eqn:price_u}
\end{equation}
where $\ee_{t,s}$ denotes the expectation under the measure $\qq$ given that $S_t=s$. At times we will suppress this dependence on the expectation and simply write $\ee$. In this paper we will only consider the European call option, and thus throughout assume that the payoff function is given by $\Phi(x)=(x-K)^+$ for some strike price $K>0$.

It is well known that in order to replicate a contract, $u$, the hedger of the
contract should at every time instant hold the amount $(\partial/\partial s)u(t,S_t)$ of the risky asset (see e.g. \citet[][]{Bjork:2009}). This is what is commonly referred to as delta hedging. However, in order to achieve this the hedge portfolio must be continuously rebalanced. In practice this is clearly impossible. An alternative strategy is to rebalance at some discrete points in time, $\{t_i\}_{i=0,1,\ldots}$. Such a strategy will give rise to a hedging error, $\mathcal{R}$, which is defined as the difference between the value of the contract and the hedge portfolio at expiration. Let $D(t,S_t)$ denote the difference in hedge ratios at time $t$
\begin{equation}
D(t,S_t)= \frac{\partial u}{\partial s}(t,S_t) - \frac{\partial u}{\partial s}(\varphi_t,S_{\varphi_t}) ,
\label{eqn:D_fct}
\end{equation}
where $\varphi_t = \sup\{t_i \, : \, t_i \leq t \}$. The discounted hedging error at time $T$ is then given by
\begin{equation}
\mathcal{R} = \int_0^T D(t,S_t) \text{d} \tilde{S}_t ,
\label{eqn:hedging_error}
\end{equation}
where $\tilde{S}_t$ denotes the discounted prices process $e^{-rt} S_t$.

We will consider a trading strategy where the hedge portfolio is rebalanced when the hedge ratios $(\partial/\partial s)u(t,S_t)$ and $(\partial/\partial s)u(\varphi_t,S(\varphi_t))$ differ by some amount, here denoted by $\eta$. We will let $t^{\eta}_i$, defined in Definition \ref{Def:hedgehittimes} below, denote the points in time where the hedge portfolio is rebalanced for some $\eta$.

\begin{definition}
\label{Def:hedgehittimes}
Define $t^\eta_0=t_0$ and then recursively define a sequence of stopping times
\begin{equation*}
t^{\eta}_i = \inf \left\{ t > t^{\eta}_{i-1} : \left| \frac{\partial u}{\partial s}(t,S_t)-\frac{\partial u}{\partial s}(t^{\eta}_{i-1},S_{t^{\eta}_{i-1}}) \right| =\eta \right\} ,
\end{equation*}
for $i=1,2,\ldots$, and let $\varphi^\eta_t=\sup \{t^\eta_i:t^\eta_i \leq
t\}$.
\end{definition}
The hedging error induced by the hedging strategy defined by Definition~\ref{Def:hedgehittimes} will be denoted by $\mathcal{R}_{a}(\eta)$. Furthermore, in this setting we will let $N^\eta_t$ denote the number of rebalancing points up to time $t$, i.e. $N^\eta_t=\sup\{ i:t_i^{\eta} \leq t \}$, and let $H^\eta_t$ denote the expected number of rebalancing points up to time $t$, thus $H^\eta_t=\ee[N^\eta_t]$.

In this paper we will study the second moment of the hedging error, $\ee[\mathcal{R}_{a}^2(\eta)]$, as we let the expected number of rebalancing points, $H^\eta_t$, approach infinity, i.e. as $\eta \downarrow 0$.

Hedging strategies on an equidistant time net was studied in \citet{Zhang:1999}. For the case of European put and call options the rate of convergence is $1/\sqrt{n}$ and
\begin{equation}
\lim_{n \rightarrow \infty} n \ee[\mathcal{R}_{f}^2(n)]=\frac{T}{2} \ee \left[ \int_0^T e^{-2rt} S^4_t \sigma^4(t,S_t) \left( \frac{\partial^2 u}{\partial s^2}(t,S_t) \right)^2 \text{d}t \right] \, ,
\label{eq:conv_eqdist}
\end{equation}
where $\mathcal{R}_{f}(n)$ denotes the hedging error using the equidistant grid with $n$ number of rebalancing points.

\section{Results} \label{sec:results}

\subsection{A limit result for approximations of Wiener driven SDEs}
The proof of our main result regarding adaptive hedging, that is Theorem~\ref{prop:main_1}, hinges on the following theorem stating that the normalized
difference $D(t,S_t)/\eta$ converges in distribution to a triangularly
distributed random variable as $\eta$ approaches $0$. As will be seen this
result holds in a more general setting. If we let $X$ be a Wiener driven SDE
satisfying some conditions then the normalized difference
$(X_t-X_{\varphi^\eta_t})/\eta$, where $\varphi^\eta_t$ is defined as in
Definition~\ref{Def:hittimes} below, converges in distribution to a triangularly
distributed random variable as $\eta$ approaches $0$.

Let $X$ be defined by
\begin{equation}
\dd X_t=a(t,X_t)\dd t+b(t,X_t)\dd W_t, \quad X_{t_0}=x_0,
\label{Eq:SDEX}
\end{equation}
We can assume that $b$ is non-negative since the distributional properties of $X$ will not change if we replace $b$ with $|b|$. 
\begin{definition}
\label{Def:hittimes}
Define $t^\eta_0=t_0$ and then recursively define a sequence of stopping times
\begin{equation*}
t^\eta_i=\inf \{t>t^\eta_{i-1}: |X_t-X_{t^\eta_{i-1}}|=\eta\},
\end{equation*}
for $i=1,2,\ldots$, and let $\varphi^\eta_t=\sup \{t^\eta_i:t^\eta_i \leq
t\}$.
\end{definition}
\begin{assumption}\label{Assump:X}${}$
\begin{list}{}{\setlength{\leftmargin}{9mm} \setlength{\labelwidth}{5mm}}
\item[A1.]
Equation \eqref{Eq:SDEX} has a unique globally existing solution
for $t\in[t_0,t_f]$.
\item[A2.]
$\ee[|h(t,X_t)|]<\infty$  for $t\in[t_0,t_f]$ where $h(t,x)\in C([t_0,t_f]\times D(x_0))$ and $D(x_0)$ are the set of points the solution can visit starting at $x_0$.
\item[A3.]
For each $\epsilon>0$ there exists $\delta_{\epsilon,h}>0$ such that $\ee[|h(t,X_t)|I_{A_{\delta_{\epsilon,h}}}(X_t)]<\epsilon$ for $t\in (t_0,t_f)$, where $\underline{b}(x)=\inf_{t\in[t_0,t_f]}b(t,x)$ and $A_\delta=\{x\in
 D(x_0):\underline{b}(x)<\delta\}$.  
\item[A4.]
The process $X$ has a bounded continuous density, p(t,x), for all $(t,x) \in (t_0,t_f) \times A_{\delta_{\epsilon,h}}^c$.
\end{list}
\end{assumption}

\begin{remark}
The reason for the somewhat technical assumption A3 is to allow for processes
which can enter a region where $b$ is arbitrary close to zero. If the function
$h$ is such that this does not contribute to the expectation things will work
out anyway. Of course if $b$ is bounded from below by a positive constant
we easily see that A3 is satisfied.
\end{remark}

\begin{theorem}\label{prop:gen_diff}
 Let $f\in C_b([-1,1])$, let $X$ and $h$ satisfy Assumption \ref{Assump:X} and
let $\varphi^\eta_t$ be given by Definition \ref{Def:hittimes}, then
\begin{align*}
\lim_{\eta\downarrow0}\ee\left[f\left(\frac{1}{\eta}(X_t-X_{\varphi^\eta_t})\right)h(t,X_t)\right]&= \lim_{\eta\downarrow0}\ee\left[f\left(\frac{1}{\eta}(X_t-X_{\varphi^\eta_t})\right)\right]\ee\left[h(t,X_t)\right]\\
&=\int_{-1}^1f(z)(1-|z|)\dd z\ee\left[h(t,X_t)\right],
\end{align*}
for $t\in(t_0,t_f)$.
\end{theorem}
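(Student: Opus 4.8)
The plan is to condition on the last rebalancing time before $t$ and reduce the statement to a local, time-reversed problem about the overshoot-free hitting of a level $\eta$ by a Brownian-type process. Fix $t\in(t_0,t_f)$. On the event that $\varphi^\eta_t=t^\eta_i$ for some $i$, we have $|X_t-X_{\varphi^\eta_t}|<\eta$ (the interval $[\varphi^\eta_t,t]$ has not yet produced the next hitting), so the argument of $f$ is automatically in $[-1,1]$ and $f$ is bounded there. The first step is to disintegrate $\ee[f(\eta^{-1}(X_t-X_{\varphi^\eta_t}))h(t,X_t)]$ over the value of $\varphi^\eta_t=s$ and of $X_s=y$, and then, given $(s,y)$, study the conditional law of $X_t-y$ under the additional constraint that $X$ has not moved by $\eta$ from $y$ on $(s,t]$. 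As $\eta\downarrow0$, both $t-s$ and the spatial excursion are of order $\eta^2$ and $\eta$ respectively, so locally $X$ looks like $b(s,y)$ times a Brownian motion started at the moment it last hit a level; this is exactly the regime in which the scaled position of a Brownian motion, conditioned to have stayed in a strip of width $2\eta$ and to have just entered it, is asymptotically the sum of two independent uniforms on $[-\tfrac12,\tfrac12]\cdot\eta$ — equivalently, triangular on $[-\eta,\eta]$. Dividing by $\eta$ gives the density $(1-|z|)$ on $[-1,1]$ in the limit.

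Concretely I would proceed as follows. \textbf{Step 1 (localization via A3--A4).} Using Assumption A3, pick $\delta=\delta_{\epsilon,h}$ so that the contribution of $\{X_t\in A_\delta\}$ to the expectation is below $\epsilon$ (uniformly in $\eta$, after a short argument controlling the extra factor $f$, which is bounded); this lets us work on $A_\delta^c$, where by A4 the process has a bounded continuous density and $b$ is bounded below by $\delta$, so the diffusion is locally nondegenerate. \textbf{Step 2 (time-reversal / renewal structure of the overshoot).} On $A_\delta^c$, write the last rebalancing instant as the most recent time in $[t_0,t]$ at which $X$ was at distance exactly $\eta$ from its value at the rebalancing before that; reverse time from $t$ and observe that $\varphi^\eta_t$ is the first time, going backwards, that $X$ has moved by $\eta$ away from the level it will next rebalance at. The key probabilistic fact is that for a nondegenerate diffusion, as $\eta\downarrow0$, the pair (displacement since last rebalancing, time since last rebalancing), rescaled by $(\eta,\eta^2)$, converges to the corresponding functional of a standard Brownian motion; and for Brownian motion this overshoot quantity is explicitly triangular. \textbf{Step 3 (decoupling).} Because the limiting law of $\eta^{-1}(X_t-X_{\varphi^\eta_t})$ does not depend on $(s,y)$ — it is universally triangular, the scale $b(s,y)$ cancelling between the width $\eta$ of the strip and the displacement — the conditional expectation of $f$ given $\mathcal{F}$ at the rebalancing time converges to $\int_{-1}^1 f(z)(1-|z|)\,\dd z$, a constant. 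Dominated convergence (using $\ee[|h(t,X_t)|]<\infty$ from A2 and boundedness of $f$) then lets us pull this constant out, yielding $\int_{-1}^1 f(z)(1-|z|)\,\dd z\cdot\ee[h(t,X_t)]$, which also identifies the middle expression in the statement. \textbf{Step 4 (remove localization).} Let $\epsilon\downarrow0$.

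The main obstacle is Step 2: making rigorous the claim that the rescaled "time and space since the last $\eta$-hit" converges to the Brownian overshoot functional, and in particular that the overshoot contributes no bias. One clean route is to replace $X$ on a short window before $t$ by its Euler/Brownian approximation with frozen coefficients $a(s,y),b(s,y)$, control the replacement error using the Lipschitz and density bounds from H1--H2 and A4 together with the fact that the window has length $O(\eta^2)$, and then invoke the known exact distribution for Brownian motion: conditionally on the value at the last passage of a symmetric barrier of half-width $\eta$ around the current level, the normalized position is a sum of two independent $\mathrm{Unif}[0,1]$-type contributions. I would isolate that Brownian computation as a separate lemma (this is presumably the "technical lemma" in the appendix referenced in the introduction) and feed it into Step 3. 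The density assumption A4 is what guarantees the conditioning is well-defined and that the error terms from freezing the coefficients integrate to something vanishing as $\eta\downarrow0$.
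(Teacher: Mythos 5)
There is a genuine gap at your Step 2, and it propagates into Step 3. Conditioning on the last rebalancing data $(\varphi^\eta_t,X_{\varphi^\eta_t})=(s,y)$ and claiming that the conditional law of $\eta^{-1}(X_t-y)$ is ``universally triangular'' is not correct: writing $t-s=\tau\eta^2$, the rescaled conditional law is that of a Brownian motion at time $\tau$ started at $0$ and conditioned to stay in $(-1,1)$, which for small $\tau$ is concentrated near $0$ and for large $\tau$ tends to the Yaglom (cosine) law proportional to $\cos(\pi z/2)$ --- never the triangle. The triangular density $1-|z|$ appears only after averaging over the age $t-\varphi^\eta_t$ with its correct renewal/occupation distribution (it is the normalized Green's function of Brownian motion on $(-1,1)$ started at $0$), i.e.\ it is a statement about the stationary occupation of the reset process, not a statement conditional on ``having just entered the strip''. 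Your proposal never identifies or controls the joint law of $(\varphi^\eta_t,X_{\varphi^\eta_t})$ --- and note that $\varphi^\eta_t$ is not a stopping time, so ``the conditional expectation of $f$ given $\mathcal F$ at the rebalancing time'' in Step 3 is not well defined as written; this joint law (equivalently the age distribution) is precisely the hard object, and invoking the Brownian ``two independent uniforms'' fact as a known lemma does not supply it. Your guess that this is the paper's appendix lemma is also wrong: that lemma consists of Gaussian-type bounds on derivatives of the call price, used in the hedging theorems, not a Brownian overshoot computation.

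The paper's proof goes in the opposite direction and avoids the renewal/age analysis entirely. Since each rebalancing level differs from the previous one by exactly $\pm\eta$, all levels lie on the lattice $x_0+\eta\zz$; hence, given $X_t=x$, the normalized difference can take only the two values $x/\eta-\lfloor x/\eta\rfloor$ and $x/\eta-\lfloor x/\eta\rfloor-1$. After a localization via A2--A3 (similar in spirit to your Step 1), the paper conditions on $X_t=x$ using the density from A4 and computes the probability of ``lower versus upper neighbouring lattice point'' as the probability that the time-reversed diffusion of Millet--Nualart--Sanz, started at $x$, hits $\eta\lfloor x/\eta\rfloor$ before $\eta\lfloor x/\eta\rfloor+\eta$; rescaling the associated boundary-value PDE and letting $\eta\downarrow 0$ gives the linear limit $1-z$, where $z$ is the fractional part of $x/\eta$. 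The asymptotically uniform fractional part (continuity of the density over each lattice cell) combined with this linear hitting probability produces the triangle and the factorization with $\ee[h(t,X_t)]$. To salvage your forward route you would have to prove the age-averaging step rigorously (e.g.\ via stationarity of the reset process $(X_u-X_{\varphi^\eta_u})/\eta$ or an occupation-measure argument), uniformly over the localized region --- a genuinely different and not obviously easier proof than the one in the paper.
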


\begin{proof}
We can without loss of generality assume that $x_0=0$. Since for each arbitrary but fixed $x_0$ we can define $\tilde{X}_t=X_t-x_0,~\tilde{h}(t,x)=h(t,x+x_0),~\tilde{a}(t,x)=a(t,x+x_0),~\tilde{b}(t,x)=b(t,x+x_0),~\tilde{p}(t,x)=p(t,x+x_0)$. Now
\begin{equation*}
\dd \tilde{X}_t=\tilde{a}(t,\tilde{X}_t)\dd t+\tilde{b}(t,\tilde{X}_t)\dd W_t,~\tilde{X}_0=0.
\end{equation*}
However to keep the notation simple we skip these tildes from now on.
  
We have two cases two distinguish between, first $\{\varphi^\eta_t=0\}=\{t^\eta_1>t\}$ and second  $\{\varphi^\eta_t>0\}=\{t^\eta_1 \leq t\}$. The probability of being in the first case goes to zero as $\eta$ tends to zero. Since $f$ is bounded and $h$ is locally bounded by continuity it is straightforward to establish that
\begin{equation*} 
\ee\left[f(X_t/\eta)h(t,X_t)I(t^\eta_1>t)\right]\leq C\sup_{|x|\leq \eta}|h(t,x)|P(t^\eta_1>t),
\end{equation*}
which tends to zero for any fixed $x_0$ as $\eta$ tends downwards to zero. So the contribution to the limit comes from the second case.

On the set $\{t^\eta_1\leq t\}$ we have that $D_{\eta}(t)=(X_t-X_{\varphi^\eta_t})/\eta$ can take the values  $X_t/\eta-\lfloor X_t/\eta\rfloor$ and $X_t/\eta-\lfloor X_t/\eta\rfloor-1$ we denote the probability for the first value conditioned on $X_t=x$ as $\bar{p}(t,x,\eta\lfloor x/\eta\rfloor)$, where $\lfloor y \rfloor$ means rounding the real number $y$ downwards to the nearest integer. We postpone the calculation of this probability until later.

Since $\ee[|f(D_\eta(t))h(t,X_t)|]<\infty$ we can for each $\epsilon>0$ find a compact set (closed interval) $K_{\epsilon}$ such that  $\ee[|f(D_\eta(t))h(t,X_t)|I_{K_\epsilon^c}(X_t)]<\epsilon/4$. Further by assumption A3 we have that we can find $\delta_{\epsilon/4,h}$ such that
\begin{equation*}
\ee[|f(D_\eta(t))h(t,X_t)|I_{A_{\delta_{\epsilon/4,h}}}(X_t)]<\epsilon/4.
\end{equation*}
Now let $B_\epsilon=A_{\delta_{\epsilon/4,h}}^c\cap K_{\epsilon}$. We then have that
\begin{equation*}
|\ee[f(D_\eta(t))h(t,X_t)]-\ee[f(D_\eta(t))h(t,X_t)I_{B_\epsilon}(X_t)]|<\epsilon/2.
\end{equation*}
Since $f$ is bounded and $h$ is uniformly continuous on compacts there exists $\eta_\epsilon$ such that for $\eta<\eta_\epsilon$
\begin{equation*}
|\ee[f(D_\eta(t))h(t,X_t)I_{B_\epsilon}(X_t)]-\ee[f(D_\eta(t))h(t,\eta\lfloor X_t/\eta\rfloor)I_{B_\epsilon}(X_t)]|<\epsilon/4.
\end{equation*}
The last expectation may now be written as an integral with respect to the density $p(t,x)$
\begin{multline*}
\int_{\rr}f(x/\eta-\lfloor x/\eta\rfloor)\bar{p}(t,x,\eta\lfloor x/\eta\rfloor)){h}(t,\eta\lfloor x/\eta\rfloor)p(t,x))I_{B_\epsilon}(x)\dd x \\
+\int_{\rr}f(x/\eta-\lfloor x/\eta\rfloor-1)(1-\bar{p}(t,x,\eta\lfloor x/\eta\rfloor){h}(t,\eta\lfloor x/\eta\rfloor)p(t,x))I_{B_\epsilon}(x)\dd x.
\end{multline*}
By decreasing $\eta$ even further, $\eta< \eta'_\epsilon$ say, these integrals can be approximated by an error of at most $\epsilon/4$ by the integrals
\begin{align*}
I_1+I_2&=\int_{\rr}f(x/\eta-\lfloor x/\eta\rfloor)\bar{p}(t,x,\eta\lfloor x/\eta\rfloor)) \\
& \qquad \times {h}(t,\eta\lfloor x/\eta\rfloor)p(t,\eta\lfloor x/\eta\rfloor)I_{B_\epsilon}(\eta\lfloor x/\eta\rfloor)\text{d}x\\
& +\int_{\rr}f(x/\eta-\lfloor x/\eta\rfloor-1)(1-\bar{p}(t,x,\eta\lfloor x/\eta\rfloor) \\
& \qquad \times {h}(t,\eta\lfloor x/\eta\rfloor)p(t,\eta\lfloor x/\eta\rfloor)I_{B_\epsilon}(\eta\lfloor x/\eta\rfloor)\dd x.
\end{align*}
We then finally arrive at
\begin{equation*}
|\ee[f(D_\eta(t))h(t,X_t)]-I_1-I_2|<\epsilon,~\text{for}~\eta< \eta'_\epsilon.
\end{equation*}

We can now treat the integrals $I_1$ and $I_2$ separately. The integral $I_1$ is given by
\begin{equation*}
I_1=\sum_{k=-\infty}^\infty \int_{k\eta}^{(k+1)\eta}f(x/\eta-k)\bar{p}(t,x,k\eta){h}(t,k\eta)p(t,k\eta)I_{B_\epsilon}(k\eta)\dd x.
\end{equation*}
Making the change of variables $x=z\eta+k\eta$ we obtain
\begin{align*}
I_1&=\sum_{k=-\infty}^\infty \eta\int_{0}^{1}f(z)\bar{p}(t,z\eta+k\eta,k\eta){h}(t,k\eta)p(t,k\eta)I_{B_\epsilon}(k\eta)\dd z\\
&=\sum_{k=-\infty}^\infty {h}(t,k\eta)p(t,k\eta)I_{B_\epsilon}(k\eta)\eta\int_{0}^{1}f(z)\bar{p}(t,z\eta+k\eta,k\eta)\dd z.
\end{align*}
Similar calculations for $I_2$ leads to
\begin{equation*}
I_2=\sum_{k=-\infty}^\infty {h}(t,k\eta)p(t,k\eta)I_{B_\epsilon}(k\eta)\eta\int_{0}^{1}f(z-1)(1-\bar{p}(t,z\eta+k\eta,k\eta))\dd z.
\end{equation*}
We are now ready to deal with the probability $\bar{p}$. Under fairly mild condition on the density $p(t,x)$, the distributional  derivative $(\partial/\partial x) (p(t,x)b^2(t,x))$ should be a locally integrable function \citep[see][]{Millet_Nualart_Sanz:1989}, there exists a time-reversal for the diffusion $X$. We can now calculate the probability $\bar{p}$ as the hitting probability of a time reversed diffusion starting at $x$ who should hit $\eta\lfloor x/\eta\rfloor$ before $\eta\lfloor x/\eta\rfloor+\eta$. 

Let $\bar{X}_s=X_{t-s}$ be the time reversed diffusion related to $X$. From \citet{Millet_Nualart_Sanz:1989} we have that the dynamics of $\bar{X}$ for $0\leq s<t$ is
\begin{align*}
\dd \bar{X}_s&=\bar{a}(s,\bar{X}_s)\dd s+\bar{b}(s,\bar{X}_s)\dd W_s, \quad \bar{X}_0 =X_t,
\end{align*} 
where
\begin{align*}
\bar{a}(s,x)&=-a(t-s,x) \\
& \qquad +\frac{1}{p(t-s,x)}\frac{\partial}{\partial x}(b(t-s,x)^2p(t-s,x))I(p(t-s,x)\neq 0),\\
\bar{b}(s,x)&=b(t-s,x).
\end{align*} 
We now look at the probability for the time reversed diffusion of hitting the
point $c$ before the point $c+\eta$ starting at $x$, for $c\leq x \leq c+\eta$, and denote this $g(t,x)$. Using the results of \citet[][Chapter 9-11]{Oksendal:1998} we obtain that
$g$ satisfies the following partial differential equation (PDE)
\begin{align*}
\frac{\partial}{\partial t}g(t,x)+\bar{a}(s,x)\frac{\partial}{\partial x}g(t,x)+\frac{\bar{b}(s,x)^2}{2}\frac{\partial^2}{\partial x^2}g(t,x) & =0,\\
g(t,c)=1, \quad g(t,c+\eta) & =0.
\end{align*} 
We now look at $\bar{g}(t,z)=g(t,c+z\eta)$ for $0\leq z \leq 1$. By using the PDE for $g$ we now find that the corresponding PDE for $\bar{g}$ is given by
\begin{align*}
\frac{\partial}{\partial t}\bar{g}(t,z)+\bar{a}(s,c+z\eta)\frac{\partial}{\partial z}
\bar{g}(t,z)/\eta+\frac{\bar{b}(s,c+z\eta)^2}{2}\frac{\partial^2}{\partial z^2}\bar{g}(t,z)/\eta^2 & =0,\\
\bar{g}(t,0)=1, \quad \bar{g}(t,1) & =0,
\end{align*} 
which is equivalent to
\begin{align*}
\eta^2\frac{\partial}{\partial t}\bar{g}(t,z)+\eta\bar{a}(s,c+z\eta)\frac{\partial}{\partial z}\bar{g}(t,z)+\frac{\bar{b}(s,c+z\eta)^2}{2}\frac{\partial^2}{\partial z^2}\bar{g}(t,z) & =0,\\
\bar{g}(t,0)=1, \quad \bar{g}(t,1) & =0.
\end{align*} 
Now letting $\eta$ tend to zero we obtain the limiting PDE, where
we let $\hat{g}$ be the limit of $\bar{g}$
\begin{align*}
\frac{\bar{b}(s,c)^2}{2}\frac{\partial^2}{\partial z^2}\hat{g}(t,z)=0,\\
\hat{g}(t,0)=1, \quad \hat{g}(t,1)=0,
\end{align*} 
which has the solution $\hat{g}(t,z)=1-z$ under the assumption that $\bar{b}$ is always bounded away from zero (which is assured for $c\in B_{\epsilon}$) and thus the limit of the  probability $\bar{p}$ will be $1-z$ which is the same as the hitting probability for a standard Brownian motion.

We can now apply this result to the integrals $I_1$ and $I_2$. Using 
dominated convergence we obtain
\begin{align*}
\lim_{\eta \downarrow 0}  I_1+I_2& = \ee[h(t,X(t))]\left(\int_0^1f(z)(1-z)\text{d}z+\int_0^1f(z-1)z\text{d}z\right)+\epsilon\\
&= \ee[h(t,X(t))]\int_{-1}^1f(z)(1-|z|)\text{d}z+\epsilon.
\end{align*}
Since $\epsilon$ can be made arbitrary small we have shown the desired result.
\end{proof}

\subsection{Adaptive discrete time hedging} \label{sec:dischedging}
Using (\ref{eqn:hedging_error}) and the It\^{o} isometry we get the following bound of the second moment of the hedging error
\begin{equation*}
\begin{split}
\ee [\mathcal{R}_a^2(\eta)] & = \ee \left[ \left( \int_0^{T} D(t,S_t) \dd \tilde{S}_t \right)^2 \right] \\
& = \ee \left[ \int_0^{T} D(t,S_t)^2 e^{-2 r t} S_t^2  \sigma^2(t,S_t) \dd t \right] \\
& \leq \eta^2 \ee \left[ \int_0^{T} e^{-2 r t}  S_t^2  \sigma^2(t,S_t) \dd t \right] \, .
\end{split}
\end{equation*}

Below we state one of our main results which regards the rate of convergence with respect to the parameter $\eta$.

\begin{theorem} \label{prop:main_1}
Let the payoff function be defined by $\Phi(x)=(x-K)^+$, let $\varphi_t^{\eta}$ be given by Definition \ref{Def:hedgehittimes}, and let $\tilde{\sigma}$ satisfy Assumption \ref{Assump:sigma}, then
\begin{equation}
\lim_{\eta \downarrow 0} \frac{1}{\eta^2} \ee[\mathcal{R}_a(\eta)^2]=\frac{1}{6} \ee \left[ \int_0^T e^{-r t}  S^2_t \sigma^2(t,S_t) \dd t \right] \, .
\label{eq:roc_1}
\end{equation}
\end{theorem}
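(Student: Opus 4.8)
The plan is to combine the It\^o-isometry identity derived above with Theorem \ref{prop:gen_diff} applied to the delta process. First I would note that, with $X_t:=(\partial u/\partial s)(t,S_t)$, the rebalancing times of Definition \ref{Def:hedgehittimes} are precisely the hitting times of Definition \ref{Def:hittimes} for $X$, so that $D(t,S_t)=X_t-X_{\varphi^\eta_t}$ and $D(t,S_t)/\eta$ is exactly the normalized difference in Theorem \ref{prop:gen_diff}. Dividing the identity $\ee[\mathcal R_a^2(\eta)]=\ee[\int_0^T D(t,S_t)^2 e^{-2rt}S_t^2\sigma^2(t,S_t)\,\dd t]$ by $\eta^2$ and applying Fubini --- legitimate since $(D(t,S_t)/\eta)^2\le1$ by construction of the stopping times and $\ee[\int_0^T e^{-2rt}S_t^2\sigma^2(t,S_t)\,\dd t]<\infty$ because $\sigma$ is bounded and $\sup_{t\le T}\ee[S_t^2]<\infty$ --- reduces the claim to computing, for each fixed $t\in(0,T)$, the limit of $\ee[(D(t,S_t)/\eta)^2 h(t,S_t)]$ with $h(t,s)=e^{-2rt}s^2\sigma^2(t,s)$, and then integrating in $t$.

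Next I would put $X$ in the form required by Theorem \ref{prop:gen_diff}. Under Assumption \ref{Assump:sigma} the European call price $u$ is smooth for $s>0$ and $t<T$, and its gamma $(\partial^2u/\partial s^2)$ is strictly positive, so $s\mapsto(\partial u/\partial s)(t,s)$ is a strictly increasing diffeomorphism onto its range; hence $S_t$ is a smooth function of $(t,X_t)$, and It\^o's formula gives $\dd X_t=a(t,X_t)\,\dd t+b(t,X_t)\,\dd W_t$ with $b(t,X_t)=(\partial^2u/\partial s^2)(t,S_t)\,\sigma(t,S_t)\,S_t$. Working on an interval $[0,t_f]$ with $t<t_f<T$ (all that the pointwise limit needs), I would then verify Assumption \ref{Assump:X}: A1 from global existence of $S$ together with smoothness of the transformation; A2 from boundedness of $\sigma$ and $\ee[S_t^2]<\infty$; A4 from the bounded continuous density of $S_t$ under H1 (Aronson-type bounds on $p_S$ for $t$ bounded away from $0$), transported by the diffeomorphism, on any region where the gamma is bounded below.

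The delicate step is A3, since $b$ degenerates exactly where the call gamma does, i.e.\ as $S_t\downarrow0$ and as $S_t\to\infty$. Near zero there is nothing to do because $h(t,s)=e^{-2rt}s^2\sigma^2(t,s)\to0$ as $s\downarrow0$; for large $S_t$ one uses that the call gamma decays faster than any negative power of $s$, which follows from the Gaussian-type bounds on the transition density under H1 (essentially the role of the technical lemma in the appendix), so that $\{b<\delta\}$ confines $S_t$ to a tail whose contribution to $\ee[h(t,S_t)]$ tends to $0$ as $\delta\downarrow0$; this gives A3 for $h$. With Assumption \ref{Assump:X} in hand, Theorem \ref{prop:gen_diff} with $f(z)=z^2\in C_b([-1,1])$ yields, for every $t\in(0,T)$,
\[
\lim_{\eta\downarrow0}\ee\!\left[\Big(\tfrac{D(t,S_t)}{\eta}\Big)^{2}h(t,S_t)\right]
=\Big(\int_{-1}^1 z^2(1-|z|)\,\dd z\Big)\,\ee[h(t,S_t)]
=\tfrac16\,\ee\!\left[e^{-2rt}S_t^2\sigma^2(t,S_t)\right].
\]

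Finally I would pass to the limit under the time integral: $(D(t,S_t)/\eta)^2\le1$, so the integrand is dominated by $h(t,S_t)$, which is integrable over $[0,T]\times\Omega$, and dominated convergence gives
\[
\lim_{\eta\downarrow0}\frac{1}{\eta^2}\ee[\mathcal R_a^2(\eta)]
=\int_0^T\tfrac16\,\ee\!\left[e^{-2rt}S_t^2\sigma^2(t,S_t)\right]\dd t
=\tfrac16\,\ee\!\left[\int_0^T e^{-2rt}S_t^2\sigma^2(t,S_t)\,\dd t\right],
\]
which is \eqref{eq:roc_1}. The only genuinely hard part is the verification of A3 --- the tail/degeneracy estimate for the gamma of the call --- for which Assumption \ref{Assump:sigma} (H1--H3) and the appendix lemma are exactly what is needed; the remaining steps are routine.
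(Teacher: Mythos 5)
Your proposal is correct and follows essentially the same route as the paper: transform to the delta process $X_t=(\partial u/\partial s)(t,S_t)$, apply the It\^o isometry, verify Assumption \ref{Assump:X} (with A3/A4 resting on the Gaussian-type gamma and density bounds of Lemma \ref{lem:bnds}), invoke Theorem \ref{prop:gen_diff} with $f(z)=z^2$, and dispose of the time endpoints (your dominated-convergence-in-$t$ step is just a cleaner packaging of the paper's $\varepsilon$-splitting of $[0,T]$). One remark: your limit carries the factor $e^{-2rt}$, which is what the paper's own proof (Step 6) and Corollary \ref{prop:main_3} also produce, so the $e^{-rt}$ in the displayed statement \eqref{eq:roc_1} appears to be a typo rather than a discrepancy in your argument.
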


\begin{proof} Below we will let $C$ and $C'$ denote bounded positive constants whose values may change from line to line. Furthermore, for notational convenience we will at times write $f_x(x,y)=(\partial/\partial x)h(x,y)$, for some function $f$, and equivalently for higher order derivatives and derivatives with respect to the other variable $y$.

\textit{Step 1.} Define the process $X$ and the function $g$ by $X_t=g(t,S_t)=(\partial/\partial s)u(t,S_t)$. Using It\^{o}'s formula the dynamics of the process $X$ is given by
\begin{multline*}
\dd X_t = \left\{ \frac{\partial g}{\partial t}(t,S_t)+ r S_t \frac{\partial g}{\partial s}(t,S_t) +\frac{1}{2}\sigma^2(t,S_t)S^2_t \frac{\partial^2 g}{\partial s^2}(t,S_t) \right\} \dd t \\
+ \sigma(t,S_t) S_t \frac{\partial g}{\partial s}(t,S_t) \dd W_t.
\end{multline*}
Since $u$ satisfies the Black-Scholes PDE we get by taking derivatives with respect to
$s$ and substituting into the drift term that
\begin{equation}
\dd X_t=-\frac{\partial}{\partial s}\left(\frac{\sigma^2(t,S_t)S^2_t}{2}\right)\frac{\partial g}{\partial s}(t,S_t) \dd t +\sigma(t,S_t)S_t \frac{\partial g}{\partial s}(t,S_t) \dd W_t.
\label{eqn:X_ito}
\end{equation}
Due to Lemma \ref{lem:bnds} (i), for each $t \in [\varepsilon,T-\varepsilon]$, the function $g(t,s)$ is continuous and strictly increasing in $s$, and thus there exists an inverse function such that $g^{-1}(t,g(t,s))=s$. Hence, \eqref{eqn:X_ito} can be written as
\begin{equation}
\dd X_t=a(t,X_t)\dd t+b(t,X_t)\dd W_t , \label{eqn:X_ito_ab}
\end{equation}
for some functions $a$ and  $b$. As in (\ref{eqn:D_fct}) we let $D(t,S_t)$ denote the difference in hedge ratios at time $t$. Using the It\^{o} isometry and the above definition of $X$ the expected squared hedging error may be written as
\begin{equation}
\begin{split}
\frac{1}{\eta^2} \ee[\mathcal{R}_a^2(\eta)] & = \frac{1}{\eta^2} \ee \left[ \left( \int_0^{T} D(t,S_t)  \dd \tilde{S}_t \right)^2 \right] \\
& = \int_0^{T} \ee \left[ \frac{1}{\eta^2}(X_t-X_{\varphi^\eta_t})^2 e^{-2  r t} S_t^2 \sigma^2(S_t)  \right] \dd t . \label{eqn:Rdiveta}
\end{split}
\end{equation}
Next we would like to apply Theorem \ref{prop:gen_diff} in order to handle the above expression as we let $\eta$ approach zero. In the setting of Theorem \ref{prop:gen_diff} we will let $h(t,x)=(g^{-1}(t,x))^2\sigma(g^{-1}(t,x))^2$ and $f(x)=x^2$. Furthermore, for some small value $\varepsilon>0$, we will divide the integral in \eqref{eqn:Rdiveta} into three parts; the first integral going from $0$ to $\varepsilon$, the second from $\varepsilon$ to $T-\varepsilon$ and the last from $T-\varepsilon$ to $T$. Since $D^2(t)/\eta^2$, $\sigma^2(t,S_t)$, $\exp\{ - 2 r t\}$ and $\ee[S_t^2]$ \citep[see][the proof of Theorem 5.1.1]{Friedman:2006} are bounded it is seen that the first and the last integral may be bounded by $\varepsilon$ times some bounded constant, and thus can be made arbitrary small by simply decreasing $\varepsilon$. Theorem \ref{prop:gen_diff} may now be applied in the region $[\varepsilon,T-\varepsilon]$. In the following steps we will check that Assumption~\ref{Assump:X} is satisfied.

\textit{Step 2.} In this step it will be shown that $X$ has a solution (i.e. assumption A1). If the coefficients of \eqref{eqn:X_ito_ab} are locally Lipschitz continuous by \citet[Theorem II.5.2]{Kunita:1984} the SDE \eqref{eqn:X_ito_ab} has a unique solution $X$ up to a possibly finite random explosion time (see also the proof of Theorem 1 in \citet[][]{Heath_Schweizer:2000}). However, since $|X|$ by Lemma \ref{lem:bnds} (i) is bounded by 1, what is left to show is local Lipschitz continuity of the coefficients of \eqref{eqn:X_ito_ab}. The derivative of $b$ with respect to $x$ is given by
\begin{align*}
\frac{\partial b}{\partial x}(t,x) & = \sigma_y(t,g^{-1}(t,x)) g^{-1}(t,x)+\sigma(t,g^{-1}(t,x)) \\
& \qquad + \frac{\sigma(t,g^{-1}(t,x)) g^{-1}(t,x) g_{ss}(t,g^{-1}(t,x))}{g_s(t,g^{-1}(t,x))},
\end{align*}
where the first two terms are bounded due to H1 and H2. To estimate the last term we will use \eqref{lem:bnd1} and \eqref{lem:bnd2} and it is seen that there is a bounded positive constant $C$ such that
\begin{align*}
\left| \frac{\sigma(t,g^{-1}(t,x)) g^{-1}(t,x) g_{ss}(t,g^{-1}(t,x))}{g_s(t,g^{-1}(t,x))}\right| \leq C e^{C (\log(g^{-1}(t,x)))^2}.
\end{align*}
Since $g^{-1}(t,x)$ is bijective in $x$ for every $t \in [\epsilon,T-\epsilon]$, and due to the limits in Lemma \ref{lem:bnds} (i), the right hand side in the above equation goes to infinity if and only if $x$ goes to $0$ or $1$. Thus for every $t \in [0,T)$, $b(t,x)$ is locally Lipschitz continuous on the interval $(0,1)$. The last thing to check is that the process $X$ never leaves the interval $(0,1)$, which is equivalent to check that the process $Y_t=\log(S_t)$ never takes the values $-\infty$ or $\infty$. However, by a straightforward application of Feller's test for explosions \citep[see][Theorem 3.5.29]{Karatzas_Shreve:1991} it is seen that the process $Y$ never explodes in a finite time, and thus $X$ never leaves $(0,1)$ in $[\epsilon,T-\epsilon]$. The local Lipschitz continuity of $a$ is shown in the same way as for $b$.

\textit{Step 3.} Shifting to the process $S$ the expectation in assumption A2 equals $\ee[\exp\{ -2 r t\}\sigma^2(t,S_t) S^2_t]$, where $\sigma$ is bounded due to H1 and $\ee[S^2_t]$ is bounded for $t \in [\epsilon,T-\epsilon]$ due to \citet[the proof of Theorem 5.1.1]{Friedman:2006}.

\textit{Step 4.} In this step we will show that assumption A3 holds. Using the first inequality in \eqref{lem:bnd1} of Lemma \ref{lem:bnds} we have that
\begin{align}
s \sigma(t,s) \frac{\partial^2 u}{\partial s^2}(t,s) & \geq C^{-1} e^{-C \log^2(s)} , \label{eqn:dgdsbnd}
\end{align}
for some bounded constant $C$. Now, consider the log process $Y$ of $S$, and let $\underline{b}^Y(y)=\inf_{t \in [\varepsilon,T-\varepsilon]} e^y \tilde{\sigma}(t,y) g_s(t,e^y)$. Define the sets
\begin{align*}
A_{\delta}^Y = \{y \in \rr : \underline{b}^Y(y) < \delta \} && \text{and} && \overline{B}_{\delta}^Y = \{ y \in \rr : C^{-1} e^{-C y^2} < \delta \}, 
\end{align*}
then $A_{\delta}^Y \subset \overline{B}_{\delta}^Y$. The expectation in assumption A3 of Theorem \ref{prop:gen_diff} now reads 
\begin{align*}
\ee \left[|h(t,X_t)| I_{A_{\delta_{\epsilon,h}}}(X_t) \right] & = \ee \left[ e^{-2 r t} \tilde{\sigma}^2(t,Y_t) e^{2Y_t}  I_{A^Y_{\delta_{\epsilon,h}}}(Y_t) \right] \\
& \leq e^{-2 t r} \funcmax{\sigma}^2 \ee \left[  e^{2Y_t}  I_{\overline{B}^Y_{\delta_{\epsilon,h}}}(Y_t)  \right],
\end{align*}
where $\funcmax{\sigma}=\sup \sigma(t,s)$. Below we will suppress the two first arguments of the transition density $p_Y$ and write $p_Y(t,y)=p_Y(0,\log(s_0),t,y)$. According to \citet[Theorem 6.4.5]{Friedman:2006} there exists a bounded constant $C$ such that 
\begin{align}
p_Y(t,y) \leq \frac{C}{\sqrt{t}} e^{-\frac{(y-\log(s_0))^2}{C t}}, \label{eqn:pYbnd}
\end{align}
for all $(t,y) \in \mathbb{R}_+ \times \mathbb{R}$. Thus, there is a bounded constant $C$ such that $p_Y(t,y)  \leq C \exp\{-(y-\log(s))^2 C^{-1}\}$ for all $(t,y) \in [\epsilon,T-\epsilon] \times \mathbb{R}$. Using this we get that
\begin{align*}
\ee \left[  e^{2Y_t}  I_{\overline{B}^Y_{\delta_{\epsilon,h}}}(Y_t) \right] & = \int_{\overline{B}^Y_{\delta_{\epsilon,h}}} e^{2 y} p_Y(t,y) \dd y  \leq C \int_{\overline{B}^Y_{\delta_{\epsilon,h}}} e^{2 y}  e^{-\frac{(y-\log(s_0))^2}{C}}  \dd y,
\end{align*}
which holds for all $t \in [\varepsilon,T-\varepsilon]$. Noting that $\overline{B}^Y_{\delta_{\epsilon,h}}=\{ y \in \rr : y^2 > -C^{-1} \log(C \delta)  \}$
and using the inequality $-x^2+cx<-x^2/2+c^2/2$ which holds for $x,c \in \rr$, we have that there are constants $C$ and $C'$ such that
\begin{align*}
\ee \left[  e^{2Y_t}  I_{\overline{B}^Y_{\delta_{\epsilon,h}}}(Y_t) \right] & \leq C' \int_{\sqrt{-C^{-1}  \log(C \delta)}}^{\infty}  e^{-\frac{y^2}{C'}} \dd y  \leq C' e^{\frac{C^{-1} \log(C \delta)}{C'}},
\end{align*}
which holds for all $t \in [\varepsilon,T-\varepsilon]$. Hence, for every $\epsilon>0$ we may chose $\delta_{\epsilon,h}$ such that $\delta_{\epsilon,h}<C^{-1} (\epsilon/C')^{C' C}$, and thus assumption A3 is satisfied.

\textit{Step 5.} In this step it will be shown that assumption A4 holds. First we will construct a set $\underline{B}_\delta^Y$ such that $(A^Y_\delta)^c \subset (\underline{B}^Y_\delta)^c$ and then show that the required properties of $p$ are satisfied in this set. Lemma \ref{lem:bnds} yields that there is a bounded constant $C$ such that
\begin{align}
\sigma(t,s) s  \frac{\partial g}{\partial s}(t,s) & \leq C e^{-\frac{\log^2(s)}{C}}, 
\label{eqn:dgdsbndup}
\end{align}
which holds for all $(t,s) \in [\varepsilon,T-\varepsilon] \times [0,\infty)$. Define the set $\underline{B}^Y_\delta$ by $\underline{B}^Y_\delta = \{ y \in \rr : C \exp\{-y^2 C^{-1}\}  < \delta \}$ then $(\underline{B}^Y_\delta)^c = \{ y \in \rr : C \exp\{-y^2 C^{-1} \} \geq \delta \}$ and $(A^Y_\delta)^c \subset (\underline{B}^Y_\delta)^c$. Furthermore, note that $\underline{B}^Y_\delta$ can be expressed as $(\underline{B}^Y_\delta)^c  = \{ y \in \rr : y^2 \leq C \log(\delta /C) \}$ and hence is bounded for every choice of $\delta>0$. The density of the process $X$ is given by
\begin{align*}
p(t,x)= \frac{p_Y(t,\log g^{-1}(t,x))}{g^{-1}(t,x) g_s(t,g^{-1}(t,x))},
\end{align*}
and thus
\begin{align*}
p(t,g(t,e^y))= \frac{p_Y(t,y)}{e^y g_s(t,e^y)}.
\end{align*}
By \eqref{eqn:pYbnd} we have that there is a bounded constant $C$ such that $|p_Y(t,y)| \leq C$ for all $(t,y) \in [\varepsilon,T-\varepsilon] \times (\underline{B}^Y_\delta)^c$. Using Lemma \ref{lem:bnds} (ii) we have that there is a bounded constant $C$ such that $1/(e^y g_s(t,e^y)) \leq C$ for all $(t,y) \in [\varepsilon,T-\varepsilon] \times (\underline{B}^Y_\delta)^c$. Consequently $p(t,g(t,e^y))$ is bounded for all $(t,y) \in [\varepsilon,T-\varepsilon] \times (\underline{B}^Y_\delta)^c$. Differentiating $p$ with respect to $x$ yields
\begin{multline*}
p_x(t,x) = \frac{\partial p_Y}{\partial y}(t,y) \bigr|_{s=\log g^{-1}(t,x)}  \frac{1}{(g^{-1}(t,x) g_s(t,g^{-1}(t,x)))^2} \\
- \frac{p_Y(t,\log g^{-1}(t,x))}{g^{-1}(t,x)^3 g_s(t,g^{-1}(t,x))^2} - \frac{p_Y(t,\log g^{-1}(t,x)) g_{ss}(t,g^{-1}(t,x))}{(g^{-1}(t,x) g_s(t,g^{-1}(t,x)))^2},
\end{multline*}
and consequently
\begin{multline*}
p_x(t,g(t,e^y)) = \frac{\partial p_Y}{\partial y}(t,y)  \frac{1}{e^{2y} g_s^2(t,e^y)} - \frac{p_Y(t,y)}{e^{3y} g_s^2(t,e^y)} - \frac{p_Y(t,y) g_{ss}(t,e^y)}{e^{2y} g_s^2(t,e^y)}.
\end{multline*}
Using \citet[][Theorem 6.4.7]{Friedman:2006}, which holds under H1 and H2, and \citet[][Theorem 6.4.5]{Friedman:2006}, which holds under H1, we have that there exists a bounded constant $C$ such that $|(\partial/\partial y)p_Y(t,y)| \leq C$ for all $(t,y) \in [\varepsilon,T-\varepsilon] \times (\underline{B}^Y_\delta)^c$ (alternatively see \citet[][Theorem 9.6.7]{Friedman:2008}). The expression $1/g_s(t,e^y)$ may be bounded using Lemma \ref{lem:bnds} (ii) and it is seen that $1/g_s(t,e^y)$ is bounded for all $(t,y) \in [\varepsilon,T-\varepsilon] \times (\underline{B}^Y_\delta)^c$. From Lemma \ref{lem:bnds} (ii) we also have that there exists a bounded constant $C$ such that $|g_{ss}(t,e^y)| \leq C$ and thus also this expression is bounded for $(t,y) \in [\varepsilon,T-\varepsilon] \times (\underline{B}^Y_\delta)^c$. Adding up, we have that $p_x(t,g(t,e^y))$ is bounded for all $(t,y) \in [\varepsilon,T-\varepsilon] \times (\underline{B}^Y_\delta)^c$. Hence, $p(t,g(t,e^y))$ is bounded and continuous for all $(t,y) \in [\varepsilon,T-\varepsilon] \times (\underline{B}^Y_\delta)^c$ for every $\delta>0$ and $\varepsilon>0$ as was to be shown in this step.

\textit{Step 6.} Applying Theorem \ref{prop:gen_diff} yields
\begin{multline*}
\lim_{\eta \downarrow 0} \int_{\varepsilon}^{T-\varepsilon} \ee \left[ \frac{1}{\eta^2}(X_t-X_{\varphi^\eta_t})^2 e^{-2 r t} S_t^2 \sigma^2(t,S_t)  \right] \dd t \\
=  \frac{1}{6} \int_{\varepsilon}^{T-\varepsilon} \ee \left[ e^{-2 r t} S_t^2 \sigma^2(t,S_t)  \right] \dd t.
\end{multline*}
Since the contribution from the integrals over $[0,\varepsilon]$ and over $[T-\varepsilon,T]$ can be made arbitrary small by simply decreasing $\varepsilon$, as described earlier, this finishes the proof.
\end{proof}

\begin{remark}
As seen in the proof of Theorem \ref{prop:main_1}, in order to apply Theorem \ref{prop:gen_diff} the inverse of the delta of the option is needed. This is the reason why we choosed to fix the payoff function of the option. For a digital option the delta is not invertible, and thus we would not be able to apply Theorem \ref{prop:gen_diff}. For power options with payoff functions of the type $(x-K)^p$, $p > 1$, invertability is not a problem. However in this case we would instead need to modify Lemma \ref{lem:bnds}.
\end{remark}

The following results deals with the rate of convergence of the expected number of rebalancings $H^\eta_T$ with respect to $\eta$ and the order of convergence of the expected squared hedging error with respect to~$H^\eta_T$.

\begin{theorem} \label{prop:main_2} 
Let the payoff function be defined by $\Phi(x)=(x-K)^+$, let $\varphi_t^{\eta}$ be given by Definition \ref{Def:hedgehittimes}, and let $\tilde{\sigma}$ satisfy Assumption \ref{Assump:sigma}, then
\begin{equation}
\lim_{\eta \downarrow 0} \eta^2 H^\eta_T = \ee \left[ \int_0^T S^2_t \sigma^2(t,S_t) \left( \frac{\partial^2 u}{\partial s^2}(t,S_t) \right)^2  \dd t \right] < \infty .
\label{eq:roc_2_b}
\end{equation}
\end{theorem}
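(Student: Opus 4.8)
The plan is to sidestep the triangular-distribution argument of Theorem~\ref{prop:gen_diff} and instead exploit directly that each rebalancing in Definition~\ref{Def:hedgehittimes} consumes exactly $\eta^2$ of quadratic variation of the delta process. I would begin exactly as in Step~1 of the proof of Theorem~\ref{prop:main_1}: set $X_t=g(t,S_t)=(\partial/\partial s)u(t,S_t)$, so that $X$ solves $\dd X_t=a(t,X_t)\dd t+b(t,X_t)\dd W_t$ with
\[
b(t,g(t,s))=\sigma(t,s)\,s\,\frac{\partial^2u}{\partial s^2}(t,s),\qquad a(t,g(t,s))=-\Bigl(\tfrac{\partial}{\partial s}\tfrac{\sigma^2(t,s)s^2}{2}\Bigr)\frac{\partial^2u}{\partial s^2}(t,s),
\]
and note that the rebalancing times $t_i^\eta$ of Definition~\ref{Def:hedgehittimes} are precisely the hitting times of Definition~\ref{Def:hittimes} for this $X$. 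In particular
\[
\langle X\rangle_T=\int_0^T b(t,X_t)^2\,\dd t=\int_0^T S_t^2\sigma^2(t,S_t)\Bigl(\frac{\partial^2u}{\partial s^2}(t,S_t)\Bigr)^2\dd t,
\]
so \eqref{eq:roc_2_b} asserts exactly that $\eta^2H_T^\eta\to\ee[\langle X\rangle_T]$ and that this expectation is finite.

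The key step is an exact identity. Applying It\^o's formula to $(X_t-X_{t_{i-1}^\eta})^2$ on $[t_{i-1}^\eta,t_i^\eta]$ and using $|X_{t_i^\eta}-X_{t_{i-1}^\eta}|=\eta$ gives $\langle X\rangle_{t_i^\eta}-\langle X\rangle_{t_{i-1}^\eta}=\eta^2-2\int_{t_{i-1}^\eta}^{t_i^\eta}(X_s-X_{\varphi_s^\eta})\,\dd X_s$. Summing over $i=1,\dots,N_T^\eta$ (here $N_T^\eta<\infty$ a.s., since $|X_{t_i^\eta}-X_{t_{i-1}^\eta}|=\eta$ is incompatible with $t_i^\eta$ accumulating before $T$) and adding the last incomplete interval $[\varphi_T^\eta,T]$, on which $\rho_\eta:=|X_T-X_{\varphi_T^\eta}|\le\eta$, yields the pathwise identity
\[
\eta^2N_T^\eta=\langle X\rangle_T-\rho_\eta^2+2\int_0^T(X_s-X_{\varphi_s^\eta})\,\dd X_s .
\]
Now take expectations: $0\le\ee[\rho_\eta^2]\le\eta^2$; the drift part of the stochastic integral is bounded pathwise by $\eta\int_0^T|a(s,X_s)|\,\dd s$; and the local‑martingale part $\int_0^t(X_s-X_{\varphi_s^\eta})b(s,X_s)\,\dd W_s$ has quadratic variation at most $\eta^2\langle X\rangle_T$, hence — once $\ee[\langle X\rangle_T]<\infty$ is known — is a true martingale of zero mean. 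Consequently $\eta^2H_T^\eta=\ee[\langle X\rangle_T]+O(\eta)\to\ee[\langle X\rangle_T]$, which is the assertion.

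What remains, and what I expect to be the main obstacle, is to prove $\ee[\langle X\rangle_T]<\infty$ together with the pathwise bound $\int_0^T|a(s,X_s)|\,\dd s\le C$ used above. I would use that the delta of a call satisfies $|X_t|\le1$ (Lemma~\ref{lem:bnds}(i)) and that the gamma obeys a heat‑kernel bound $s\,(\partial^2u/\partial s^2)(t,s)\le C(T-t)^{-1/2}$ uniformly in $s$, valid up to maturity; combined with the boundedness of $\sigma$ and of $(\partial/\partial y)\tilde\sigma$ from Assumption~\ref{Assump:sigma} this gives $|a(s,X_s)|\le C(T-s)^{-1/2}$, integrable on $[0,T]$, so $M_t:=X_t-X_0-\int_0^ta(s,X_s)\,\dd s=\int_0^tb(s,X_s)\,\dd W_s$ is a bounded martingale and $\ee[\langle X\rangle_T]=\ee[\langle M\rangle_T]=\ee[M_T^2]<\infty$. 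The delicate point is the near‑maturity gamma estimate, because Lemma~\ref{lem:bnds} as used in the proof of Theorem~\ref{prop:main_1} supplies Gaussian‑type bounds on $s\,(\partial^2u/\partial s^2)(t,s)$ only on intervals $[\varepsilon,T-\varepsilon]$ with $\varepsilon$‑dependent constants; one must track the blow‑up as $t\uparrow T$. A clean route is to write $(\partial^2u/\partial s^2)(t,s)$ in terms of the transition density $p_S(t,s,T,\cdot)$ and its spatial derivatives and invoke the Friedman estimates $|\partial_s^kp_S(t,s,T,s')|\le C(T-t)^{-(k+1)/2}\exp(-(s'-s)^2/(C(T-t)))$ \citep[see][Theorems 6.4.5 and 6.4.7]{Friedman:2006}, which under Assumption~\ref{Assump:sigma} give $s\,(\partial^2u/\partial s^2)(t,s)\le C(T-t)^{-1/2}\exp(-\log^2(s/K)/(C(T-t)))$; integrating the square of this against the (bounded) transition density of $Y$ then produces $\ee[S_t^2\sigma^2(t,S_t)(\partial^2u/\partial s^2)^2(t,S_t)]\le C(T-t)^{-1/2}$, which is integrable over $[0,T]$ — precisely what both the finiteness claim and the remainder estimates require. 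A restatement of Lemma~\ref{lem:bnds} carrying the explicit $(T-t)$‑dependence would package this neatly.
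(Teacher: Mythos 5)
Your proposal is correct and follows essentially the same route as the paper: the exact identity $\eta^2 N_T^\eta=[X]_T-\rho_\eta^2+2\int_0^T(X_s-X_{\varphi_s^\eta})\,\dd X_s$, the bound $|X_T-X_{\varphi_T^\eta}|\le\eta$, the vanishing of the martingale part in expectation, and an $O(\eta)$ estimate of the drift contribution (the paper uses Cauchy--Schwarz with $\ee[\bar a^2]$ where you use a pathwise heat-kernel bound, a cosmetic difference). The ``main obstacle'' you flag is already resolved in the paper by Lemma \ref{lem:bnds}(iii), i.e.\ \eqref{lem:bnd3}, whose proof carries exactly the $(T-t)^{-1/2}$ near-maturity rate you describe, so no restatement of the lemma is needed.
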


\begin{proof}
Let as in the proof of Theorem \ref{prop:main_1} the process $X$ be defined as $X_t=g(t,S_t)=(\partial/\partial s)u(t,S_t)$. The dynamics of $X$ is given by (see \eqref{eqn:X_ito})
\begin{equation*}
\dd X_t = \bar{a}(t,S_t) \dd t + \bar{b}(t,S_t) \dd W_t,
\end{equation*}
where
\begin{align*}
\bar{a}(t,S_t) = -\frac{\partial}{\partial s}\left(\frac{\sigma^2(t,S_t)S^2_t}{2}\right)\frac{\partial g}{\partial s}(t,S_t) && \text{and} && \bar{b}(t,S_t) = \sigma(t,S_t)S_t \frac{\partial g}{\partial s}(t,S_t).
\end{align*}
Since the distance between $X_{t_{i}^{\eta}}$ and $X_{t_{i+1}^{\eta}}$ always equals $\eta$ it holds that
\begin{align}
\eta^2 N^{\eta}_T = \sum_{i=0}^{N_T^\eta-1} (X_{t_{i+1}^{\eta}}-X_{t_{i}^{\eta}})^2. \label{eqn:id1}
\end{align}
Letting $t_{N_T^{\eta}+1}=T$, using \eqref{eqn:id1} and that
\begin{align*}
[X]_T=X_T^2-X_0^2-2 \int_0^T X_s \text{d}X_s
\end{align*}
we get that
\begin{align*}
& |\ee[\eta^2 N_T^\eta-[X]_T]|  \\
& = \Biggl|\ee \Biggl[\sum_{i=0}^{N_T^\eta} (X_{t_{i+1}^{\eta}}-X_{t_{i}^{\eta}})^2-(X_{T}-X_{t_{N_T^\eta}^{\eta}})^2 -X_T^2+X_0^T + 2 \int_0^T X_s \text{d}X_s  \Biggr] \Biggr| \\
& = \Biggl|\ee \Biggl[-2\sum_{i=0}^{N_T^\eta} X_{t_{i}^{\eta}}(X_{t_{i+1}^{\eta}}-X_{t_{i}^{\eta}})-(X_{T}-X_{t_{N_T^\eta}^{\eta}})^2 + 2\sum_{i=0}^{N_T^\eta} \int_{t_{i}^{\eta}}^{t_{i+1}} X_s \text{d}X_s  \Biggr] \Biggr|.
\end{align*}
Using that $|X_{T}-X_{t_{N_T^\eta}^{\eta}}|\leq \eta$ together with the above equation we get the following inequality
\begin{align*}
|\ee[\eta^2 N_T^\eta-[X]_T]| & \leq \Biggl|\ee \Biggl[2 \int_{0}^{T} (X_s-X_{\varphi_s^{\eta}}) \text{d}X_s\Biggr] \Biggr| + \eta^2.
\end{align*}
The first term on the right hand side in the equation above may be bounded as
\begin{align*}
& \Biggl|\ee \Biggl[2 \int_{0}^{T} (X_s-X_{\varphi_s^{\eta}}) \text{d}X_s\Biggr] \Biggr| \\
&  = 2 \Biggl|\ee \Biggl[\int_{0}^{T} (X_s-X_{\varphi_s^{\eta}}) \bar{a}(s,S_s) \text{d}s  + \int_{0}^{T} (X_s-X_{\varphi_s^{\eta}}) \bar{b}(s,S_s) \text{d}W_s \Biggr] \Biggr| \\
& = 2 \Biggl|\ee \Biggl[\int_{0}^{T} (X_s-X_{\varphi_s^{\eta}}) \bar{a}(s,S_s) \text{d}s \Biggr] \Biggr| \\
& \leq 2 \eta \left( T \int_{0}^{T} \ee[a^2(s,S_s)]  \text{d}s \right)^{\frac{1}{2}} ,
\end{align*}
where we used the Cauchy-Schwartz inequality in the last step. Due to H1 and H2 there is a bounded constant $C$ such that
\begin{align*}
\bar{a}^2(t,S_t) & = \left(\sigma(t,S_t) S_t \frac{\partial \sigma}{\partial s}(t,S_t)+\sigma^2(t,S_s)) \right)^2 S_s^2 \left( \frac{\partial^2 u}{\partial s^2}(t,S_t) \right)^2  \\
& \leq C S_s^2 \left( \frac{\partial^2 u}{\partial s^2}(t,S_t) \right)^2.
\end{align*}
By the last estimate in Lemma \ref{lem:bnds}
\begin{align*}
\ee \left[ \int_0^T S^2_t \left( \frac{\partial^2 u}{\partial s^2}(t,S_t) \right)^2  \dd t \right] < \infty,
\end{align*}
which implies that $\int_{0}^{T} \ee[\bar{a}^2(s,S_s)]  \text{d}s < \infty$. From the calculations above it follows that the difference between $\eta^2 \ee[N_T^\eta]$ and $\ee[X]_T$ goes to $0$ as $\eta$ approaches $0$, and thus
\begin{align*}
\lim_{\eta \downarrow 0} \eta^2 \ee[N_T^\eta]=\ee[X]_T=\ee \left[ \int_0^T S^2_t \sigma^2(t,S_t) \left( \frac{\partial^2 u}{\partial s^2}(t,S_t) \right)^2  \dd t \right]<\infty
\end{align*}
as was to be shown. 
\end{proof}

From Theorem \ref{prop:main_1} and \ref{prop:main_2} the following corollary regarding the rate of convergence with respect to $H^\eta_T$ follows.

\begin{corollary} \label{prop:main_3}
Let the payoff function be defined by $\Phi(x)=(x-K)^+$, let $\varphi_t^{\eta}$ be given by Definition \ref{Def:hedgehittimes}, and let $\tilde{\sigma}$ satisfy Assumption \ref{Assump:sigma}, then
\begin{multline}
\lim_{\eta \downarrow 0} H_\eta(T) \ee [\mathcal{R}_a^2(\eta)]=\frac{1}{6} \int_0^T \ee \left[ e^{-2 r t} S^2_t \sigma^2(t,S_t) \right] \, \dd t \\
\times \int_0^T \ee \left[ S^2_t \sigma^2(t,S_t) \left( \frac{\partial^2 u}{\partial s^2}(t,S_t) \right)^2  \right]   \, \dd t  \, .
\label{eq:roc_2}
\end{multline}
\end{corollary}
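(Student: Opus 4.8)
The plan is to derive Corollary~\ref{prop:main_3} directly from Theorems~\ref{prop:main_1} and~\ref{prop:main_2} by a simple multiplication of limits. First I would write $H_\eta(T)\,\ee[\mathcal{R}_a^2(\eta)]$ as the product
\begin{align*}
H_\eta(T)\,\ee[\mathcal{R}_a^2(\eta)]=\bigl(\eta^2 H^\eta_T\bigr)\cdot\frac{1}{\eta^2}\ee[\mathcal{R}_a^2(\eta)],
\end{align*}
which is valid for every $\eta>0$ since the factors of $\eta^2$ and $\eta^{-2}$ cancel. By Theorem~\ref{prop:main_2} the first factor converges, as $\eta\downarrow 0$, to the finite limit $\ee\bigl[\int_0^T S^2_t\sigma^2(t,S_t)(\partial^2 u/\partial s^2)^2\,\dd t\bigr]$, and by Theorem~\ref{prop:main_1} the second factor converges to $\tfrac16\ee\bigl[\int_0^T e^{-rt}S^2_t\sigma^2(t,S_t)\,\dd t\bigr]$; since both limits are finite, the product of the two sequences converges to the product of the limits.

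Next I would rewrite the resulting product in the form displayed in~\eqref{eq:roc_2}. Here one must note that the expectations $\ee$ appearing in Theorems~\ref{prop:main_1} and~\ref{prop:main_2} are taken under $\qq$ with $S_0=s_0$, so $\ee[\int_0^T\cdots\,\dd t]=\int_0^T\ee[\cdots]\,\dd t$ by Fubini (the integrands are nonnegative), which turns each factor into the iterated form $\int_0^T\ee[\cdots]\,\dd t$ as written in the statement. One small bookkeeping point to reconcile is the discount factor: Theorem~\ref{prop:main_1} as stated carries $e^{-rt}$ inside its integral while~\eqref{eq:roc_2} carries $e^{-2rt}$; I would simply carry through whichever discounting is consistent with~\eqref{eqn:Rdiveta} (where the $\tfrac{1}{\eta^2}\ee[\mathcal{R}_a^2(\eta)]$ integrand is $\ee[\tfrac{1}{\eta^2}(X_t-X_{\varphi^\eta_t})^2 e^{-2rt}S_t^2\sigma^2(S_t)]$), so that the discount factor in the hedging-error factor of the product is $e^{-2rt}$, matching~\eqref{eq:roc_2}.

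The only genuinely nontrivial point — and it is where I expect the ``obstacle'' to lie, though it is a mild one — is justifying that $\lim(a_\eta b_\eta)=(\lim a_\eta)(\lim b_\eta)$ for the two sequences. This is immediate from the algebra of limits once one knows both limits exist and are finite, which is exactly what Theorems~\ref{prop:main_1} and~\ref{prop:main_2} provide; no uniformity in $\eta$ beyond the two individual convergences is needed. Thus the proof is essentially:
\begin{align*}
\lim_{\eta\downarrow 0}H_\eta(T)\,\ee[\mathcal{R}_a^2(\eta)]
&=\lim_{\eta\downarrow 0}\bigl(\eta^2 H^\eta_T\bigr)\cdot\lim_{\eta\downarrow 0}\frac{1}{\eta^2}\ee[\mathcal{R}_a^2(\eta)]\\
&=\ee\!\left[\int_0^T S^2_t\sigma^2(t,S_t)\Bigl(\frac{\partial^2 u}{\partial s^2}(t,S_t)\Bigr)^2\dd t\right]\cdot\frac16\,\ee\!\left[\int_0^T e^{-2rt}S^2_t\sigma^2(t,S_t)\,\dd t\right],
\end{align*}
and an application of Fubini to pull the time integrals outside the expectations yields the stated form~\eqref{eq:roc_2}. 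I would close by remarking that both factors are finite (the first by Theorem~\ref{prop:main_2}, the second by the elementary bound $\ee[\int_0^T e^{-2rt}S_t^2\sigma^2(t,S_t)\,\dd t]<\infty$ already used in Section~\ref{sec:dischedging}), so the product is a well-defined finite constant.
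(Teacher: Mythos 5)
Your proposal is correct and is exactly the paper's route: the paper offers no separate argument for Corollary~\ref{prop:main_3}, stating only that it follows from Theorems~\ref{prop:main_1} and~\ref{prop:main_2}, i.e.\ precisely your factorization $H_\eta(T)\,\ee[\mathcal{R}_a^2(\eta)]=(\eta^2 H^\eta_T)\cdot\eta^{-2}\ee[\mathcal{R}_a^2(\eta)]$ combined with the algebra of limits. Your side remark about the discount factor is also well taken: the $e^{-rt}$ in the statement of Theorem~\ref{prop:main_1} is inconsistent with its own proof (see \eqref{eqn:Rdiveta} and Step~6, which carry $e^{-2rt}$), and the $e^{-2rt}$ you keep is the one that matches \eqref{eq:roc_2}.
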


\begin{remark}
Note that we do not gain anything in terms of rate of convergence applying the adaptive strategy compared to the strategy using a equidistant time net. Also note the similarity between expression (\ref{eq:conv_eqdist}) and (\ref{eq:roc_2}). It is hard to say anything general about the ordering between (\ref{eq:conv_eqdist}) and (\ref{eq:roc_2}), i.e. when adaptive rebalancing is more feasible than equidistant rebalancing or vice versa.
\end{remark}

\section{Discussion} \label{sec:discussion}

The limiting distribution of the normalized difference in hedge ratios between the hedge portfolio and the derivative has been derived. Using this result the limiting expected squared hedging error as the number of rebalancings approach infinity is investigated. We show that $H_\eta(T) \ee [\mathcal{R}_a^2(\eta)]$ converges to a nondegenerate limit as $\eta$ approaches $0$, and derive an explicit expression of this limit. This also shows that we do not gain anything in terms in order of convergence when using the adaptive rebalancing strategy compared to when the hedge portfolio is rebalanced on an equidistant time grid.

Further research could be directed towards more general payoff functions or other market models. One direction could be to study multi dimensional markets that can be made complete by introducing more hedge instruments, e.g. the Heston model.

\appendix
\section{A technical lemma}

\begin{lemma} \label{lem:bnds}
Let the payoff function be defined by $\Phi(x)=(x-K)^+$, let
$\varphi_t^{\eta}$ be given by Definition \ref{Def:hedgehittimes}, let $u$ be
given by (\ref{eqn:price_u}), and let $\tilde{\sigma}$ satisfy Assumption~\ref{Assump:sigma}, then
\begin{enumerate}
\item  for each $t \in [0,T)$ the function $(\partial / \partial s) u(t,s)$ is continuous and strictly increasing in $s$, $\lim_{s \downarrow 0} (\partial / \partial s)u(t,s)=0$ and $\lim_{s \uparrow \infty} (\partial / \partial s)u(t,s)=1$,
\item for $\epsilon > 0$ and $|k| < \infty$ there is a bounded positive constant $C$ such that for all $(t,s) \in [0,T-\epsilon] \times \rr_+$
\begin{align}
C^{-1} e^{-C\log^2(s)} \leq s^k \frac{\partial^2 u}{\partial s^2} (t,s) & \leq C e^{-C^{-1}\log^2(s)} , \label{lem:bnd1}
\intertext{and}
\left| \frac{\partial^3 u}{\partial s^3}(t,s) \right| & \leq C , \label{lem:bnd2}
\end{align}
\item there is a bounded positive constant $C$ such that
\begin{align}
\int_0^T \ee\left[ S_t^2 \left(\frac{\partial^2 u}{\partial s^2}(t,S_t)\right)^2 \right] \text{d} t & \leq C . \label{lem:bnd3}
\end{align}
\end{enumerate}
\end{lemma}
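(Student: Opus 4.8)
The plan is to deduce all three parts from two-sided Gaussian estimates for transition densities and for fundamental solutions of uniformly parabolic equations, of the kind already used elsewhere in the paper. Throughout I would work with the log-price $Y_t=\log S_t$, which under H1 is uniformly elliptic (its diffusion coefficient $\tilde\sigma$ is bounded above and bounded below away from $0$), and with $\tilde u(t,x)=u(t,e^x)$, which solves the log-Black--Scholes equation $\partial_t\tilde u+\tilde a\partial_x\tilde u+\tfrac12\tilde\sigma^2\partial_{xx}\tilde u-r\tilde u=0$ with $\tilde a=r-\tfrac12\tilde\sigma^2$. Under H1--H3 the coefficient $\tilde\sigma$ lies locally in $C^{2,\alpha}$, so $p_Y$ is smooth for $T-t>0$ and $\tilde u(t,\cdot)\in C^4$ for $t<T$; hence every derivative appearing in the statement exists and one may differentiate the Feynman--Kac representation under the integral sign.

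The core of the lemma is (ii). Since $\partial_x=s\partial_s$ one has $s^2\partial_{ss}u=\partial_{xx}\tilde u-\partial_x\tilde u$, and because the distributional operator $\partial_{yy}-\partial_y$ sends the call payoff $(e^y-K)^+$ to the pure point mass $K\delta_{\log K}$, the function $\omega(t,x):=\partial_{ss}u(t,e^x)$ should be a fundamental solution with a $\delta$-singularity at the strike. Concretely, differentiating the log-Black--Scholes equation twice in $x$ one finds that $\omega$ solves a linear equation $\partial_t\omega+\tilde\beta\,\partial_x\omega+\tfrac12\tilde\sigma^2\partial_{xx}\omega+\tilde\gamma\,\omega=0$ on $(0,T)\times\rr$, again uniformly parabolic, with bounded H\"older coefficients built from $\tilde\sigma,\tilde\sigma_y,\tilde\sigma_{yy}$, and with $\omega(t,\cdot)\to K^{-1}\delta_{\log K}$ as $t\uparrow T$; by uniqueness in the appropriate growth class, $\omega(t,x)=K^{-1}\Gamma(t,x;T,\log K)$ where $\Gamma$ is the fundamental solution of this operator. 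The classical Aronson-type two-sided bounds $C^{-1}(T-t)^{-1/2}e^{-(x-y)^2/(C(T-t))}\le\Gamma(t,x;T,y)\le C(T-t)^{-1/2}e^{-(x-y)^2/(C(T-t))}$, together with $|\partial_x\Gamma|\le C(T-t)^{-1}e^{-(x-y)^2/(C(T-t))}$ (from the cited references), specialized to $y=\log K$ and to $t\in[0,T-\varepsilon]$, make $(T-t)^{-1/2}$ bounded both ways and convert $(x-\log K)^2$ into a multiple of $x^2$ up to an additive constant; this gives $C^{-1}e^{-C\log^2 s}\le\partial_{ss}u(t,s)\le Ce^{-C^{-1}\log^2 s}$, the factor $s^k=e^{kx}$ being absorbed by completing the square, while $\partial_{sss}u=e^{-x}\omega_x$ is controlled by the $\partial_x\Gamma$-estimate. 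In particular $\partial_{ss}u>0$ everywhere, so $s\mapsto\partial_su(t,s)$ is $C^1$ and strictly increasing for each $t<T$, which is the monotonicity assertion of (i).

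For (iii) the upper half of the above bound holds for all $t<T$, so $\partial_{ss}u(t,s)\le C(T-t)^{-1/2}e^{-(\log s-\log K)^2/(C(T-t))}$; completing the square to absorb $S_t^2=e^{2\log S_t}$ and bounding the density of $S_t$ by $p_Y(0,\log s_0,t,\log\xi)/\xi\le C\,t^{-1/2}\xi^{-1}e^{-(\log\xi-\log s_0)^2/(Ct)}$ (\citet[Theorem 6.4.5]{Friedman:2006}) yields $\ee[S_t^2(\partial_{ss}u(t,S_t))^2]\le C\,t^{-1/2}(T-t)^{-1/2}$, and $\int_0^T t^{-1/2}(T-t)^{-1/2}\dd t<\infty$. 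For the two limits in (i) I would use put--call parity $u(t,s)=s-Ke^{-r(T-t)}+p(t,s)$ with $p(t,s)=e^{-r(T-t)}\ee_{t,s}[(K-S_T)^+]$ the put price, which is convex and decreasing in $s$ and takes values in $[0,Ke^{-r(T-t)}]$: a bounded, decreasing, convex $C^1$ function has $\partial_sp\to0$ at $+\infty$, so $\partial_su=1+\partial_sp\to1$; and since $u(t,0^+)=0$ by dominated convergence (using $S_T^{t,s}\le S_T^{t,1}$ for $s\le1$ by comparison of solutions), $u(t,s)=\int_0^s\partial_su(t,\sigma)\dd\sigma$, while $u(t,s)\le e^{-r(T-t)}\ee_{t,s}[S_T\mathbf 1_{S_T>K}]\le Ce^{-c\log^2 s}$ from the Gaussian upper bound on $p_Y$, which forces $\inf_{\sigma>0}\partial_su(t,\sigma)=\lim_{s\downarrow0}\partial_su(t,s)=0$.

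The main obstacle is the quantitative part of (ii). One must verify that H1--H3 supply exactly enough smoothness of $\tilde\sigma$ to justify differentiating the pricing equation twice (and thrice) and to place the resulting equation for $\omega$ within the class covered by the fundamental-solution estimates, and one must appeal to the \emph{lower} Gaussian bound, which is precisely what pins down the strict, correctly-rated positivity of the gamma. The degeneracy of the operator in the original $s$-variable (diffusion coefficient $\sigma^2s^2$, vanishing at $0$ and growing at infinity) is the reason the passage to the log variable is indispensable; once (ii) is in hand, parts (i) and (iii) follow comparatively routinely from the transition-density estimates.
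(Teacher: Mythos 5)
Your route is workable and genuinely different from the paper's in the one place that matters: how the identity ``gamma $=$ (dual) transition density'' is obtained. The paper never touches the forward/Dupire PDE; it invokes the probabilistic put--call duality of \citet[Corollary 3.2]{Jourdain:2007}, $u(t,s)=\ee_{t,K}[(s-\hat S_T)^+]$ with $\dd\hat S_t=-r\hat S_t\dd t+\sigma(T-t,\hat S_t)\hat S_t\dd\hat W_t$, so that $\partial_s u(t,s)=\int_{-\infty}^{\log s}p_{\hat Y}(t,\log K,T,y)\dd y$ and $\partial_{ss}u(t,s)=p_{\hat Y}(t,\log K,T,\log s)/s$ come for free; the two-sided bound \eqref{lem:bnd1} is then Aronson's estimate for $p_{\hat Y}$ plus the completing-the-square inequality \eqref{eqn:sqineq}, \eqref{lem:bnd2} is Friedman's gradient bound on $p_{\hat Y}$, the limits in (i) are immediate because $\partial_s u(t,\cdot)$ is the distribution function of $\hat Y_T$, and (iii) is the same Gaussian convolution you perform (the paper computes it exactly, getting $C/\sqrt{T(T-t)}$; your cruder $Ct^{-1/2}(T-t)^{-1/2}$ is equally integrable). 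So your estimates coincide with the paper's; what you replace is the duality step, by identifying $\omega(t,x)=\partial_{ss}u(t,e^x)$ with $K^{-1}\Gamma(t,x;T,\log K)$ through a closed parabolic equation and a uniqueness argument, and in (i) you replace the dual-CDF argument for the limits by put--call parity plus convexity (which also needs the martingale property $\ee_{t,s}[S_T]=se^{r(T-t)}$, true here since $\sigma$ is bounded). Each approach buys something: the paper's is shorter because the hard analytic content is outsourced to Jourdain's duality, which is stated exactly under H1--H3; yours is self-contained at the PDE level and would extend to payoffs for which a duality formula is not available off the shelf.

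Two points in your sketch carry real content and are currently only asserted. First, differentiating the log Black--Scholes equation twice in $x$ does \emph{not} give a closed equation for $\tilde u_{xx}$ (a lower-order term $\tilde a_{xx}\tilde u_x$ survives); the closed equation is for $\gamma=u_{ss}$ in the $s$-variable, namely $\gamma_t+(rs+2A_s)\gamma_s+A\gamma_{ss}+(r+A_{ss})\gamma=0$ with $A=\sigma^2s^2/2$, equivalently for $\omega(t,x)=\gamma(t,e^x)=e^{-2x}(\tilde u_{xx}-\tilde u_x)$ after the change of variables; the conclusion you want (uniformly parabolic, coefficients bounded and regular under H1--H3) is correct, but the derivation as stated should be repaired. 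Second, the identification ``$\omega=K^{-1}\Gamma(\cdot,\cdot;T,\log K)$ by uniqueness in the appropriate growth class'' is the crux and needs an actual argument: you must justify that $u(t,\cdot)$ is smooth enough for $t<T$ to differentiate the equation classically, that the coefficients (which involve $\tilde\sigma_y$ and $\tilde\sigma_{yy}$, hence need H2--H3, not just H1) are regular enough for Friedman's parametrix construction and for the divergence-form rewriting behind Aronson's \emph{lower} bound, and that $\omega$ lies in the uniqueness class and attains the terminal datum $K^{-1}\delta_{\log K}$ --- e.g.\ by representing $\omega(t,x)=\int\Gamma(t,x;\tau,y)\,\omega(\tau,y)\dd y$ for $\tau<T$ and letting $\tau\uparrow T$ using that $\omega(\tau,y)e^{y}\dd y$ are probability measures converging weakly to $\delta_{\log K}$. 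None of this fails, but it is precisely the work the paper's citation of Jourdain replaces, so it cannot be left at the level of ``uniqueness in the appropriate growth class''.
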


\begin{proof} Throughout the proof we will let $C$ denote a positive bounded constant whose value may change from line to line.

\textit{(i)}: Using \eqref{eqn:d2uds2}--\eqref{eqn:sqineq} it is seen that for $t \in [0,T)$ the derivative of $(\partial/\partial s)u(t,s)$ with respect to $s$ is bounded and alway positive which implies that $(\partial/\partial s)u(t,s)$ is continuous and strictly increasing in $s$.

Let the process $\hat{S}$ be defined by the dynamics
\begin{align*}
\text{d} \hat{S}_t = -r\hat{S}_t \text{d} t + \sigma(T-t,\hat{S}_t) \hat{S}_t \text{d} \hat{W}_t,
\end{align*}
where $\hat{W}$ is a standard Wiener process, and let the process $\hat{Y}$ be defined by $\hat{Y}_t=\log(\hat{S}_t)$.
Due to the put-call duality the call price may be represented by \citep[see][Corollary 3.2, which holds under H1-H3]{Jourdain:2007}
\begin{align}
u(t,s) = e^{-r(T-t)} \mathbb{E}_{t,s}[(S_T-K)^+] = \mathbb{E}_{t,K}[(s-\hat{S}_T)^+]. \label{eqn:u}
\end{align}
Let $p_{\hat{S}}$ denote the transition density of the process $\hat{S}$ (i.e. $p_{\hat{S}}(t,s,T,A)=\mathbb{P}_{t,s}(\hat{S}_T \in A)$) and let $p_{\hat{Y}}$ denote the transition density of the process $\hat{Y}$.  According to \citet[Theorem 6.5.4]{Friedman:2006}, under H1, the transition density $p_{\hat{Y}}$ exists and is given by the fundamental solution to the following PDE
\begin{multline*}
\frac{\partial}{\partial t}p_{\hat{Y}}(t,y,t',y')+ \left( -r-\frac{1}{2} \tilde{\sigma}^2(T-t,y) \right) \frac{\partial}{\partial y}p_{\hat{Y}}(t,y,t',y') \\
+ \frac{1}{2} \tilde{\sigma}^2(T-t,y) \frac{\partial^2}{\partial y^2}p_{\hat{Y}}(t,y,t',y') = 0.
\end{multline*}
Differentiating \eqref{eqn:u} once with respect to $s$ yields
\begin{align*}
\frac{\partial u}{\partial s} (t,s) = \int_0^s p_{\hat{S}}(t,K,T,x) \text{d}x = \int_{-\infty}^{\log(s)} p_{\hat{Y}}(t,\log(K),T,y) \text{d}y.
\end{align*}
According to \citet[][Theorem 6.4.5]{Friedman:2006} there exists a bounded constant $C$ such that $p_{\hat{Y}}(t,y,t',y')  \leq C (t'-t)^{-\frac{1}{2}} \exp\{-(y'-y)^2 (C(t'-t))^{-1}\}$. Using this we get that
\begin{multline*}
0 \leq \ee_{t,\log(K)} \left[I({\hat{Y}_T<\log(s)}) \right] \\
\leq \int_{-\infty}^{\log(s)} \frac{C}{\sqrt{T-t}} e^{-\frac{(y-\log(K))^2}{C(T-t)}} \dd y = \int_{-\infty}^{\frac{\log(s/K)}{\sqrt{T-t}}} C e^{-\frac{y^2}{C}} \dd y,
\end{multline*}
which goes to $0$ as $s$ approaches $0$, and which proves that $\lim_{s \downarrow 0} u_s(t,s)=0$. To show the other limit we note that
\begin{multline*}
1 \geq \ee_{t,\log(K)} \left[I({\hat{Y}_T<\log(s)}) \right]=1-\ee_{t,\log(K)} \left[I({\hat{Y}_T \geq \log(s)}) \right] \\
\geq 1-\int_{\log(s)}^{\infty} \frac{C}{\sqrt{T-t}} e^{-\frac{(y-\log(K))^2}{C(T-t)}} \dd y=1-\int_{\frac{\log(s/K)}{\sqrt{T-t}}}^{\infty} C e^{-\frac{y^2}{C}} \dd y,
\end{multline*}
where the last term goes to zero as $s$ approaches infinity, and which proves that $\lim_{s \uparrow \infty} u_s(t,s)=1$

\textit{(ii) Eqn. \eqref{lem:bnd1}}: Differentiating \eqref{eqn:u} twice with respect to $s$ we get
\begin{align}
\frac{\partial^2 u}{\partial s^2} (t,s) = p_{\hat{S}}(t,K,T,s) =  \frac{p_{\hat{Y}}(t,\log(K),T,\log(s))}{s}. \label{eqn:d2uds2}
\end{align}
According to \citet[][Theorem 1]{Aronson:1967}, which holds under H1, $p_{\hat{Y}}$ may be bounded as
\begin{align}
C^{-1} \frac{e^{-C\frac{(y'-y)^2}{t'-t}}}{\sqrt{t'-t}} \leq p_{\hat{Y}}(t,y,t',y') \leq C \frac{e^{-C^{-1} \frac{(y'-y)^2}{t'-t}}}{\sqrt{t'-t}}. \label{eqn:ghatbnd}
\end{align}
for some bounded positive constant $C$. Using the identity \eqref{eqn:d2uds2} and the inequalities \eqref{eqn:ghatbnd} together with the inequalities 
\begin{align}
-2x^2-c^2/4 \leq -x^2+cx \leq -x^2/2+c^2/2, \label{eqn:sqineq}
\end{align}
which holds for all $x,c \in \mathbb{R}$, we have that there is a bounded positive constant, $C$, such that
\begin{align*}
C^{-1} e^{-C \log^2(s)} \leq s^k \frac{\partial^2 u}{\partial s^2} (t,s) & \leq C e^{-C^{-1} \log^2(s)} ,
\end{align*}
which holds for all $(t,s) \in [\epsilon,T-\epsilon] \times \mathbb{R}_+$ and all finite $k$.

\textit{(ii) Eqn. \eqref{lem:bnd2}}: Differentiating \eqref{eqn:u} three times we get that
\begin{align}
\left| \frac{\partial^3 u}{\partial s^3} (t,s) \right| \leq \left|
\frac{p_{\hat{Y}}(t,\log(K),T,\log(s))}{s^2}  \right| + \left| \frac{1}{s^2}
\left.\frac{\partial}{\partial y} p_{\hat{Y}}(t,\log(K),T,y)
\right|_{y=\log(s)} \right|. \label{eqn:pfbnd2}
\end{align}
The first term on the right hand side in \eqref{eqn:pfbnd2} may be bounded using \eqref{eqn:ghatbnd} and \eqref{eqn:sqineq}, and it is seen that this term is bounded for all $(t,s) \in [\epsilon,T-\epsilon] \times \mathbb{R_+}$. Using \citet[][Theorem 6.4.7]{Friedman:2006}, which holds under H1 and H2, and \citet[][Theorem 6.4.5]{Friedman:2006}, which holds under H1, we have that there is a bounded positive constant $C$ such that
\begin{align*}
\left| \frac{\partial}{\partial y'} p_{\hat{Y}}(t,y,t',y')
\right| \leq \frac{C}{(t'-t)} e^{-\frac{(y'-y)^2}{C(t'-t)}},
\end{align*}
for all $(t,s) \in [0,T] \times \mathbb{R_+}$ (alternatively see \citet[][Theorem 9.6.7]{Friedman:2008}). The above inequality together with \eqref{eqn:sqineq} yields that also the second term on the right hand side in \eqref{eqn:pfbnd2} is bounded for all $(t,s) \in [\epsilon,T-\epsilon] \times \mathbb{R_+}$. Thus, for some bounded constant $C$, $|(\partial^3/\partial s^3)u(t,s)| \leq C$ for all $(t,s) \in [\epsilon,T-\epsilon] \times \mathbb{R}_+$.

\textit{(iii) Eqn. \eqref{lem:bnd3}}: Recall that $Y_t = \log(S_t)$. According to \citet[][Theorem 6.4.5]{Friedman:2006} there exists a bounded constant $C$ such that $p_Y(t,y,t',y')  \leq C (t'-t)^{-\frac{1}{2}} \exp\{-(y'-y)^2(C(t'-t))^{-1}\}$. Using this together with the identity \eqref{eqn:d2uds2} and the inequality to the right in \eqref{eqn:ghatbnd} we get that
\begin{align*}
\ee \left[s^2 \left(\frac{\partial^2 u}{\partial s^2} (t,s)\right)^2  \right] \leq C \int_{\mathbb{R}} \frac{e^{-\frac{(y-\log(K))^2}{C(T-t)}}}{T-t} \frac{e^{- \frac{(y-\log(s_0))^2}{Ct}}}{\sqrt{t}} \text{d} y = \frac{\sqrt{C \pi} e^{-\frac{\log^2(s_0/K)}{CT}}}{\sqrt{T(T-t)}},
\end{align*}
for some bounded positive constant $C$. The above expression is integrable with respect to $t$ over $[0,T]$, which proves~\eqref{lem:bnd3}.
\end{proof}

\end{document}